\documentclass[11pt]{article}
\usepackage{geometry}
\usepackage{amsmath, amssymb, amsthm}
\usepackage{graphicx}
\usepackage{bm}
\usepackage[linesnumbered,ruled]{algorithm2e}
\usepackage{fullpage}
\usepackage{mathtools}
\usepackage{url}
\usepackage{color} 

\newtheorem{lemma}{Lemma}

\newtheorem{claim}{Claim}

\title{Polynomial-Time Solvability of 
the Mixed Evacuation Problem}
\author{Naoyuki Kamiyama\thanks{This work was 
supported by JSPS KAKENHI Grant Number JP24K14825.}}
\date{Graduate School of Informatics\\ 
Kyoto University \\ 
Kyoto, Japan \\
\url{kamiyama@amp.i.kyoto-u.ac.jp}}

\begin{document}

\maketitle

\begin{abstract}
The mixed evacuation problem is a variant of 
the dynamic transshipment problem where there exist 
two modes of transportation and we allocate 
supplies to each mode of transportation in such a way that 
we can send all the supplies 
from sources to sinks within a given time limit.
It is known that this problem can be solved in polynomial time 
when the number of sources and sinks is small. 
In this paper, we prove that 
the mixed evacuation problem can be solved in 
polynomial time with the ellipsoid method. 
\end{abstract}

\section{Introduction} 

A dynamic network flow, which was introduced by Ford and
Fulkerson~\cite{FordF58,FordF62}, 
is a flow defined in
a dynamic network, i.e., a directed
graph with capacities and transit times on its arcs. 
The dynamic transshipment problem is a problem in 
a dynamic network with multiple sources and multiple sinks, and 
the goal of the dynamic transshipment problem
is to determine whether we can send all the supplies 
from sources to sinks within a given time limit. 
It is known that the dynamic transshipment problem
can be solved in polynomial time~\cite{HoppeT00}. 

The mixed evacuation problem~\cite{HanawaHKKT18} is a variant of 
the dynamic transshipment problem where there exist 
two modes of transportation and we allocate 
supplies to each mode of transportation
in such a way that 
we can send all the supplies 
from sources to sinks within a given time limit. 
It is known that this problem can be solved in polynomial time 
when the number of sources and sinks is small~\cite{HanawaHKKT18}.
In this paper, we prove that 
the mixed evacuation problem can be solved in 
polynomial time with the ellipsoid method. 

\paragraph{AI disclosure.}
The main idea of the proof was obtained by using GPT-5.6 Luna and GPT-6 Astra. 
More precisely, the idea that the dual problem \eqref{eq:dual} is used to prove 
Lemmas~\ref{lemma_1} and \ref{lemma_2} was obtained 
by using ChatGPT.
The author independently verified 
the mathematical correctness and 
reorganized the proof.
In addition, ChatGPT was used for proofreading. 
The author 
takes full responsibility for the 
correctness and the content of this paper.

\section{Preliminaries}
\label{section:preliminaries} 

Let $\mathbb{R}$, $\mathbb{R}_+$, 
$\mathbb{Z}_+$, and $\mathbb{Z}_-$ denote  
the sets of reals, 
non-negative reals, non-negative integers, 
and non-positive integers, respectively.  
For each finite set $N$, each vector $x \in \mathbb{R}^N$, and 
each subset $M \subseteq N$, 
we define 
$x(M) := \sum_{e \in M}x(e)$.
Furthermore, for each finite set $N$ and each 
pair of vectors $x,y \in \mathbb{R}^N$, 
we define $\langle x,y \rangle := \sum_{e \in N}x(e)y(e)$. 

The \emph{mixed evacuation problem} is defined as follows. 
In this problem, we are given a finite directed graph 
$D=(V,A)$ and two disjoint subsets $S,T \subseteq V$. 
The subset $S$ (resp.\ $T$) is the set of source vertices 
(resp.\ sink vertices) in $V$.   
We assume that no arc in $A$ enters (resp.\ leaves) 
a vertex in $S$ (resp.\ $T$).
In addition, in this problem, 
we are given 
an arc capacity vector $c \in \mathbb{Z}_+^A$, 
a supply vector $b \in \mathbb{Z}_+^S$, 
a sink capacity vector $u \in \mathbb{Z}_-^T$, 
transit time vectors  
$\tau_1, \tau_2 \in \mathbb{Z}_+^A$, and 
coefficients $q_1,q_2 \in \mathbb{Z}_+$.
Finally, we are given a time limit ${\rm \Theta} \in \mathbb{Z}_+$.
Define $[{\rm \Theta}] := \{0,1,\ldots, {\rm \Theta}\}$. 
All numerical input data are encoded in binary. 
Define $L$ as the encoding length of the instance of the mixed evacuation problem. 

For each integer $i \in \{1,2\}$, 
each function $f \colon A \times [{\rm \Theta}] \to \mathbb{R}_+$, 
each vertex $v \in V$, and each integer $\theta \in [{\rm \Theta}]$, 
we define 
\begin{equation*}
\partial_if(v, \theta) := 
\sum_{a \in \delta_A(v)}\sum_{t = 0}^{\theta} f(a,t)
-
\sum_{a \in \varrho_A(v)}\sum_{t = 0}^{\theta - \tau_i(a)} f(a,t),
\end{equation*}
where $\delta_A(v)$ (resp.\ $\varrho_A(v)$) 
represents the set of 
arcs in $A$ leaving (resp.\ entering) $v$.
In other words, $\partial_if(v, \theta)$ represents the difference between 
the total amount of flow leaving $v$ by the time step $\theta$ and 
the total amount of flow entering $v$ by the time step $\theta$. 

A vector $d \in \mathbb{R}^{S \cup T}$ is called 
an {\it allocation} if 
$d(v) \ge 0$ for every vertex $v \in S$ and 
$d(v) \le 0$ for every vertex $v \in T$.
For each integer $i \in \{1,2\}$, 
each allocation $d \in \mathbb{R}^{S \cup T}$, 
and each vector $w \in \mathbb{R}^A_+$, 
$d$ is said to be {\it $(i,w)$-feasible} if 
there exists a function $f \colon A \times  [{\rm \Theta}] \to \mathbb{R}_+$ 
satisfying the following conditions. 
\begin{description}
\setlength{\parskip}{0mm} 
\setlength{\itemsep}{1.2mm} 
\item[(D1)]
Let $a$ and $\theta$ be an arc in $A$ and an integer in $[{\rm \Theta}]$, respectively. 
\begin{itemize}
\setlength{\parskip}{0mm} 
\setlength{\itemsep}{1.2mm} 
\item
If $\theta \le {\rm \Theta} - \tau_i(a)$, then 
$f(a, \theta) \le q_i \cdot w(a)$.
\item
If $\theta > {\rm \Theta} - \tau_i(a)$, then 
$f(a, \theta) = 0$.
\end{itemize}
\item[(D2)]
Let $v$ and $\theta$ be a vertex in $V$ and an 
integer in $[{\rm \Theta}]$, respectively. 
\begin{itemize}
\setlength{\parskip}{0mm} 
\setlength{\itemsep}{1.2mm} 
\item
If $v \in V \setminus (S \cup T)$, then 
$\partial_if(v,{\rm \theta}) \le 0$. 
\item
If $v \in S$ (resp.\ $T$), then 
$\partial_if(v,{\rm \theta}) \le d(v)$
(resp.\ $\partial_if(v,{\rm \theta}) \ge d(v)$). 
\end{itemize}
\item[(D3)]
Let $v$ be a vertex in $V$. 
\begin{itemize}
\setlength{\parskip}{0mm} 
\setlength{\itemsep}{1.2mm} 
\item
If $v \in V \setminus (S \cup T)$, then 
$\partial_if(v,{\rm \Theta}) = 0$. 
\item
If $v \in S$, then 
$\partial_if(v,{\rm {\rm \Theta}}) = d(v)$.
\end{itemize}
\end{description}

For each integer $i \in \{1,2\}$ and 
each vector $w \in \mathbb{R}^A_+$, 
let $\mathcal{F}_i(w)$ be the set of 
$(i,w)$-feasible allocations in $\mathbb{R}^{S \cup T}$. 
An {\it assignment} is a 
tuple $(d_1,d_2,w_1,w_2)$
such that 
$d_1,d_2$ are allocations in $\mathbb{R}^{S \cup T}$ and 
$w_1,w_2 \in \mathbb{R}^A_+$.
Furthermore, an assignment $(d_1,d_2,w_1,w_2)$
is said to be {\it feasible} if 
it satisfies the following conditions.
\begin{description}
\setlength{\parskip}{0mm} 
\setlength{\itemsep}{1.2mm} 
\item[(F1)]
For every vertex $v \in S$ (resp.\ $T$),
$d_1(v) + d_2(v) = b(v)$ (resp.\ $\ge u(v)$).
\item[(F2)]
For every arc $a \in A$, 
we have 
$w_1(a) + w_2(a) \le c(a)$.
\item[(F3)]
$d_1 \in \mathcal{F}_1(w_1)$ 
and 
$d_2 \in \mathcal{F}_2(w_2)$. 
\end{description}

The goal of the mixed evacuation problem 
is to decide whether there exists a feasible assignment.
It is 
not difficult to see that 
we can straightforwardly formulate 
the mixed evacuation problem
as a linear program.
However, the size of 
this linear program depends on $\Theta$.
Since $\Theta$ is encoded in binary, 
the size of 
this linear program may not be bounded 
by a 
polynomial in $L$.

\section{Useful Facts} 
\label{section:useful}

In this section, we introduce 
facts that are needed in Section~\ref{section:polytime}.

Assume that we are given an integer $i \in \{1,2\}$, 
a vector $w \in \mathbb{R}_+^A$, and 
a subset $X \subseteq S \cup T$.
We define ${\bf D}_{i,w}(X)$ as the set of functions 
$f \colon A\times [{\rm \Theta}] \to \mathbb{R}_+$
satisfying (D1) and the following conditions (D4) and (D5). 
\begin{description}
\setlength{\parskip}{0mm} 
\setlength{\itemsep}{1.2mm} 
\item[(D4)]  
Let $v$ and $\theta$ be a vertex in $V$ and an 
integer in $[{\rm \Theta}]$, respectively. 
\begin{itemize}
\setlength{\parskip}{0mm} 
\setlength{\itemsep}{1.2mm} 
\item
If $v \in V \setminus X$, then 
$\partial_if(v,\theta) \le 0$. 
\item
If $v \in X$, then 
$\partial_if(v,\theta) \ge 0$. 
\end{itemize}
\item[(D5)]
For every vertex $v \in V \setminus (S \cup T)$, 
we have 
$\partial_if(v, {\rm \Theta}) = 0$.
\end{description}
Recall that 
no arc in $A$ enters (resp., leaves) 
a vertex in $S$ (resp., $T$).
Thus, for every function $f \in {\bf D}_{i,w}(X)$ and
every vertex $v \in (S \setminus X) \cup (T \cap X)$, 
$\partial_if(v, {\rm \Theta}) = 0$ holds.  
Intuitively speaking, an element in ${\bf D}_{i,w}(X)$ is a 
dynamic flow from $S \cap X$ to 
$T \setminus X$ without supplies and demands. 
Furthermore, we define the function 
$o_{i,w} \colon 2^{S \cup T} \to \mathbb{R}_+$ by   
\begin{equation*}
o_{i,w}(X) := \max\Big\{\sum_{v \in X}\partial_if(v,{\rm \Theta}) 
\ \Big| \ f \in {\bf D}_{i,w}(X)\Big\}.
\end{equation*}
Intuitively speaking, $o_{i,w}(X)$ is the maximum flow value 
of a dynamic flow from $S \cap X$ to 
$T \setminus X$. 
The following result was cited by Hoppe and Tardos~\cite{HoppeT00}
as personal communication.

\begin{lemma}[Klinz~{\cite[Theorem~5.1]{HoppeT00}}] \label{theorem:K00}
Assume that we are given an integer $i \in \{1,2\}$,  
an allocation $d \in \mathbb{R}^{S\cup T}$,
and a vector $w \in \mathbb{R}_+^A$.
Then
$d \in \mathcal{F}_i(w)$
if and only if 
$d(X) \le o_{i,w}(X)$
for every subset $X \subseteq S \cup T$.
\end{lemma}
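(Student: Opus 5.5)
Necessity is straightforward; sufficiency is a cut argument in the time-expanded network, combined with the special structure of sets that are not "closed" for the given $X$.

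\emph{Necessity.} Let $f$ witness $d\in\mathcal{F}_i(w)$ and fix $X\subseteq S\cup T$. We want a function $g\in{\bf D}_{i,w}(X)$ with $\sum_{v\in X}\partial_i g(v,\Theta)\ge d(X)$. The natural choice is to keep only the part of $f$ carrying supply from $S\cap X$ to $T\setminus X$. For this we will:
\begin{itemize}
\item decompose $f$ into chain flows (paths over time) in the time-expanded network, which is legitimate because (D2)--(D3) make the flow conservative with storage, i.e.\ holdover at vertices is allowed;
\item discard the chains that start in $S\setminus X$ or end in $T\cap X$;
\item bound the value of what remains from below by $d(S\cap X)-(\text{flow into }T\cap X)\ge d(S\cap X)+d(T\cap X)=d(X)$, using $\partial_i f(v,\Theta)\ge d(v)$ for $v\in T$.
\end{itemize}
Condition (D1) is inherited because we only decrease $f$. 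For (D4), the remaining flow keeps $\partial\le 0$ at intermediate vertices, since it is a union of paths. At vertices of $X$ in $S$ the value is $\ge 0$; at vertices of $T\cap X$ it is $0$; at vertices of $S\setminus X$ it is $0$.

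\emph{Sufficiency.} Build the standard time-expanded network $N$ with the following ingredients:
\begin{itemize}
\item copies $(v,\theta)$ for $\theta\in[\Theta]$;
\item arc copies $(u,\theta)\to(v,\theta+\tau_i(a))$ of capacity $q_iw(a)$, present only when $\theta\le\Theta-\tau_i(a)$;
\item holdover arcs of infinite capacity;
\item a super source $s^*$ joined to $(v,0)$ for $v\in S$ with capacity $d(v)$;
\item a super sink $t^*$ joined from $(v,\Theta)$ for $v\in T$ with capacity $-d(v)$.
\end{itemize}
A static $s^*$--$t^*$ flow of value $d(S)$ whose sink arcs are saturated up to $-d(T)$ exactly yields $f$ satisfying (D1)--(D3). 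Balancing any slack in $d$ at $T$ requires an argument: since (D2) only demands $\partial_i f\ge d(v)$ at sinks, what we actually need is a flow of value $d(S)$ in which each sink receives at most $-d(v)$. By max-flow/min-cut, it then suffices to show that every $s^*$--$t^*$ cut has capacity at least $d(S)$.

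Take a minimum cut $R\ni s^*$. Let $X_S=\{v\in S:(v,0)\in R\}$ and $X_T=\{v\in T:(v,\Theta)\in R\}$, and put $X=X_S\cup X_T$. The cut capacity is
\[
d(S\setminus X_S)-d(X_T)+\kappa,
\]
where $\kappa$ is the capacity of the time-expanded arcs leaving $R$. Because the holdover arcs have infinite capacity, membership of $(v,\cdot)$ in $R$ is monotone in $\theta$. We then need $\kappa\ge o_{i,w}(X)$, i.e.\ that $\kappa$ bounds the maximum value of a flow in ${\bf D}_{i,w}(X)$. The intended argument: $\kappa$ is the capacity of a cut separating $S\cap X$ (at time $0$) from $T\setminus X$ (at time $\Theta$) in the network without supply bounds. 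Hence, by the easy direction of max-flow/min-cut, it bounds the value of any flow from $S\cap X$ to $T\setminus X$. Interpreted via (D4)--(D5), elements of ${\bf D}_{i,w}(X)$ are exactly such flows, with value $\sum_{v\in X}\partial_i f(v,\Theta)$.

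Combining, the cut capacity is at least $d(S\setminus X)-d(T\cap X)+d(X)=d(S)$, as required.

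\emph{Main obstacle.} The delicate point is matching (D4) with flows between these terminal sets. Condition (D4) permits $v\in X$ with $v\notin S\cup T$ only if $X\subseteq S\cup T$, which holds here, so no such vertices occur. Its inequalities must be checked to represent exactly conservation with storage, sources in $S\cap X$ and sinks in $T\setminus X$. I would verify this by writing the cumulative constraints as the node balances of $N$ with holdover arcs.
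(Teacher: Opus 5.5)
Your proof is correct. It takes a genuinely different route from the paper only in that the paper gives no proof at all: it imports the statement as Klinz's theorem via Hoppe and Tardos and moves on.

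You instead prove it directly in the paper's own model.
\begin{itemize}
\item Necessity: you write a witness $f$ as a static flow in the time-expanded network, with holdover values given by the cumulative differences, which are nonnegative by (D2) and (D3). You decompose it into time-respecting paths and keep only the paths from $S\cap X$ to $T\setminus X$. Their total value is at least $d(S\cap X)+d(T\cap X)$, because the inflow into each $v\in T$ is at most $-d(v)$.
\item Sufficiency: you use max-flow/min-cut. Finiteness of a minimum cut $R$ forces the copies of each vertex lying in $R$ to form a final segment in time. Hence the transit arcs leaving $R$ separate all copies of $S\cap X$ from all copies of $T\setminus X$. Since (D4) forces elements of ${\bf D}_{i,w}(X)$ to carry no flow through $S\setminus X$ or $T\cap X$, weak duality gives $\kappa\ge o_{i,w}(X)$.
\end{itemize}

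What your route buys is an explicit treatment of two features of the paper's definitions that the bare citation leaves the reader to reconcile with the cited setting. First, sinks only have capacities ($\partial_i f(v,\Theta)\ge d(v)$), which you model by capacity $-d(v)$ on the sink arcs; this is exactly why $d(T\cap X)$ enters $d(X)$ with its sign. Second, storage is allowed at intermediate vertices, which you model by infinite holdover arcs. What the citation buys is brevity. Your network has size proportional to $\Theta$, but that is harmless because the lemma is used only as a structural characterization.

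Two small points to tidy up. When $\tau_i$ has zero entries, the decomposition can also produce cycles, necessarily inside a single time layer; dropping them leaves every $\partial_i$ value unchanged. Also, the garbled sentence in your last paragraph should simply say that $X\subseteq S\cup T$, so (D4) imposes $\partial_i\le 0$ at every non-terminal vertex.
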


In what follows, let $r$ be a new vertex that is not contained in $V$. 
Define $U \coloneqq V \cup \{r\}$. 
For each vertex $v \in S$ (resp.\ $T$), 
we define $a_{r,v}$ (resp.\ $a_{v,r}$) as an
arc from $r$ to $v$
(resp.\ $v$ to $r$). 

For each subset $X \subseteq S \cup T$, 
we introduce the following definitions. 
Define 
\begin{equation*}
E_X := A \cup \{a_{r,v} \mid v \in S \cap X\} \cup \{a_{v,r} \mid v \in T \setminus X\}.
\end{equation*}
Define the directed graph $H_X$ as 
$(U,E_X)$.
Then for each integer $i \in \{1,2\}$, each vector 
$w \in \mathbb{R}_+^A$, 
and each vector $\xi \in \mathbb{R}^{E_X}_+$,
$\xi$ is called a 
{\it feasible static flow} with respect to $(i,w,X)$ if 
it satisfies the following conditions (S1) and (S2).
\begin{description}
\setlength{\parskip}{0mm} 
\setlength{\itemsep}{1.2mm} 
\item[(S1)]
For every arc $a \in A$, we have 
$\xi(a) \le q_i \cdot w(a)$.
\item[(S2)]
For every vertex $v \in U$, 
we have 
$\xi(\delta_{E_X}(v)) = \xi(\varrho_{E_X}(v))$.
\end{description} 

For each integer $i \in \{1,2\}$ and 
each vector $w \in \mathbb{R}_+^A$, 
we define ${\bf S}_{i,w}(X)$ as 
the set of feasible static flows with 
respect to $(i,w,X)$. 
For each integer $i \in \{1,2\}$, 
we define 
the vector $k_i^X \in \mathbb{R}^{E_X}$ as follows. 
For each arc $a \in A$,
we define $k_i^X(a):= -\tau_i(a)$. 
For each arc $a \in E_X \setminus A$, 
if $a = a_{r,v}$ for some vertex $v \in S$, 
then we define 
$k_i^X(a):= 0$. 
Furthermore, for each arc $a \in E_X \setminus A$, 
if $a = a_{v,r}$ for some 
vertex $v \in T$, 
then we define 
$k_i^X(a):= {\rm \Theta} + 1$. 

\begin{lemma}[Ford and Fulkerson~\cite{FordF58,FordF62}] \label{theorem:FF58}
For every integer $i \in \{1,2\}$, every vector $w \in \mathbb{R}_+^A$, 
and every subset $X \subseteq S \cup T$, 
the value $o_{i,w}(X)$ is equal to 
the optimal objective value of the problem of 
maximizing 
$\langle k_i^X, \xi \rangle$ under the condition that 
$\xi \in {\bf S}_{i,w}(X)$. 
\end{lemma}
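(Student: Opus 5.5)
We prove the statement by the classical temporally repeated flow argument of Ford and Fulkerson, adapted to the present conventions: time steps $0,\ldots,{\rm \Theta}$, the transit times $\tau_i$, the capacities $q_i\cdot w$, and the return arcs $a_{v,r}$ carrying value ${\rm \Theta}+1$. Fix $i$, $w$, and $X$. We show the two inequalities between $o_{i,w}(X)$ and $\max\{\langle k_i^X,\xi\rangle \mid \xi\in{\bf S}_{i,w}(X)\}$ separately.

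\textbf{Lower bound ($\ge$).} Take an optimal $\xi\in{\bf S}_{i,w}(X)$, which exists because the static LP is a bounded polytope. Decompose it into cycles in $H_X$. Cycles not passing through $r$ lie inside $D$ and have nonpositive $k_i^X$-length (all $-\tau_i\le 0$), so we may drop them. The remaining cycles correspond to paths $P$ from some $s\in S\cap X$ to some $t\in T\setminus X$ with values $\lambda_P$. Each such path has length $\ell_P := {\rm \Theta}+1-\tau_i(P)$. We may drop paths with $\ell_P\le 0$ without decreasing the objective.

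For each remaining path $P$, send $\lambda_P$ units along $P$ at every departure time $t=0,\ldots,{\rm \Theta}-\tau_i(P)$. This gives $\ell_P$ departures, each arriving at $t$ by time ${\rm \Theta}$. Summing over paths yields a function $f$, which we check lies in ${\bf D}_{i,w}(X)$:
\begin{itemize}
\item (D1) holds because the flow on arc $a$ at each time is at most $\sum_{P\ni a}\lambda_P\le \xi(a)\le q_i w(a)$, and it vanishes after ${\rm \Theta}-\tau_i(a)$.
\item (D4) and (D5) hold because intermediate vertices only forward flow they have already received, sources in $X$ only emit, and sinks outside $X$ only receive.
\end{itemize}
The total value $\sum_{v\in X}\partial_if(v,{\rm \Theta})$ equals $\sum_P\lambda_P\ell_P=\langle k_i^X,\xi\rangle$. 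Sinks in $T\cap X$ contribute $0$ since nothing enters them, so no sign issue arises there.

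\textbf{Upper bound ($\le$).} Here we use LP duality on the static problem. The dual assigns potentials $\pi\colon U\to\mathbb{R}$ and arc prices $y\ge 0$ on $A$, with constraints
\begin{itemize}
\item $y(a)\ge \pi(h)-\pi(t)-\tau_i(a)$ for each arc $a\in A$ from $t$ to $h$,
\item $\pi(v)\ge \pi(r)$ for $v\in S\cap X$,
\item $\pi(r)-\pi(v)\ge -({\rm \Theta}+1)$, i.e.\ $\pi(v)\le \pi(r)+{\rm \Theta}+1$, for $v\in T\setminus X$,
\end{itemize}
and objective $\sum_a q_iw(a)y(a)$. Normalize $\pi(r)=0$. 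Choose an optimal $\pi$ given by shortest-path distances, so that $0\le\pi\le{\rm \Theta}+1$ after clipping; clipping is harmless since it only tightens the constraints in the right direction.

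Now take any $f\in{\bf D}_{i,w}(X)$ and bound its value by a cut argument over time. For each $\theta$, consider the vertex set $C_\theta=\{v:\pi(v)\le\theta\}$, and sum the flow out of the "time-expanded cut" formed by copies $(v,\theta)$ with $\theta\ge\pi(v)$. An arc copy $(a,t)$ crosses this cut only when $t\ge\pi(\text{tail})$ and $t+\tau_i(a)<\pi(\text{head})$. There are at most $\max(0,\pi(h)-\pi(t)-\tau_i(a))\le y(a)$ such times $t$, and each carries at most $q_iw(a)$. Hence the value is at most $\sum_a q_iw(a)y(a)$, the dual optimum.

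The \textbf{main obstacle} is this last step. The difficulty is handling the inequality constraints (D4), which allow holdover and storage, and making sure the boundary conventions (arrival at time ${\rm \Theta}$ counts, hence the constant ${\rm \Theta}+1$) match exactly. I would first verify the counting on a single arc $s\to t$ with transit time $\tau$: the dynamic value is $({\rm \Theta}+1-\tau)\,q_iw(a)$, which agrees with $k_i^X$. Then I would write the cut-crossing bound as a telescoping sum of the inequalities $\partial_if(v,\theta)\le 0$ weighted by indicator functions of $\theta\ge\pi(v)$.
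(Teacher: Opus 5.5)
The paper does not prove this lemma; it cites Ford and Fulkerson. So your argument has to stand on its own. Your lower bound is correct. The cycle decomposition, discarding cycles that avoid $r$ and paths with $\ell_P\le 0$, and repeating each path at departure times $0,\dots,\Theta-\tau_i(P)$ give an $f\in{\bf D}_{i,w}(X)$ of value $\sum_P\lambda_P\ell_P$. Your checks of (D1), (D4) and (D5) are also right.

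\textbf{The upper bound has a genuine error in the dual.} Take your arc constraint $y(a)\ge\pi(h)-\pi(t)-\tau_i(a)$ and apply the same sign convention to $a_{r,v}$ (value $0$) and $a_{v,r}$ (value $\Theta+1$). This gives $\pi(v)\le\pi(r)$ for $v\in S\cap X$ and $\pi(v)\ge\pi(r)+\Theta+1$ for $v\in T\setminus X$, which is \eqref{eq:dual} with $\beta=-\pi$. You wrote both inequalities reversed.

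This has two consequences:
\begin{itemize}
\item As stated, $\pi\equiv 0$, $y\equiv 0$ is feasible, so your ``dual optimum'' is $0$ and the bound you conclude is false.
\item Your constraints do not force $W=\{(v,\theta)\mid\theta\ge\pi(v)\}$ to contain every copy of the sources in $X$ and no copy of the sinks outside $X$. That is the only reason the flow leaving $W$ bounds the value.
\end{itemize}

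With the correct signs, normalizing $\pi(r)=0$ and clipping give $\pi=0$ on $S\cap X$ and $\pi=\Theta+1$ on $T\setminus X$. The telescoping you have in mind then works. Set $\partial_if(v,-1):=0$ and use Iverson brackets; for $f\in{\bf D}_{i,w}(X)$,
\[
\sum_{v\in V}\bigl(\partial_if(v,\Theta)-\partial_if(v,\pi(v)-1)\bigr)
=\sum_{a\in A}\sum_{\theta\in[\Theta]}f(a,\theta)\bigl([\theta\ge\pi(\partial^+a)]-[\theta+\tau_i(a)\ge\pi(\partial^-a)]\bigr).
\]
On the left:
\begin{itemize}
\item $S\cap X$ contributes exactly the value.
\item $T\setminus X$, $S\setminus X$ and $T\cap X$ contribute $0$.
\item Each intermediate vertex contributes $-\partial_if(v,\pi(v)-1)\ge 0$, by (D4) and (D5).
\end{itemize}
On the right, only copies with $\pi(\partial^+a)\le\theta<\pi(\partial^-a)-\tau_i(a)$ count positively.

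\textbf{You also need integrality of $\pi$.} There are at most $\max(0,\pi(\partial^-a)-\pi(\partial^+a)-\tau_i(a))\le y(a)$ such copies only when $\pi$ is integral. You must therefore justify that an integral optimal $\pi$ exists. For example, the dual constraint matrix is totally unimodular and $k_i^X$ is integral, so the pointed polyhedron ${\bf T}_i(X)$ has an integral optimal vertex. Your phrase ``shortest-path distances'' asserts this fact without proving it.
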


\section{Polynomial-Time Solvability}
\label{section:polytime}

In this section, we prove that 
the mixed evacuation problem can be solved in polynomial time. 

Define $\mathbb{V} \coloneqq (\mathbb{R}^{S \cup T})^2 \times 
(\mathbb{R}^A)^2$. 
In addition, we define 
${\bf K}$ as the 
set of tuples $(d_1,d_2,w_1,w_2) \in \mathbb{V}$ 
satisfying the 
following constraints. 
\begin{equation} \label{eq:K}
\begin{split}
& d_i(v) \ge 0 
\ \ \ \ (i \in \{1,2\}, v \in S), \ \ \  
 d_i(v) \le 0 
\ \ \ \ (i \in \{1,2\}, v \in T) \\
& w_i(a) \ge 0 
\ \ \ \ (i \in \{1,2\}, a \in A) \\
& d_1(v) + d_2(v) = b(v)  
\ \ \ \ (v \in S), \ \ \  
 d_1(v) + d_2(v) \ge u(v)  
\ \ \ \ (v \in T) \\
& w_1(a) + w_2(a) \le c(a)  
\ \ \ \ (a \in A) \\
& d_i(X) \le o_{i,w_i}(X) 
\ \ \ \ (i \in \{1,2\}, X \subseteq S \cup T).
\end{split}
\end{equation}
The mixed evacuation problem is equivalent 
to the problem of determining 
whether ${\bf K} \neq \emptyset$. 

It follows from 
\cite[Theorem~6.4.9]{GrotschelLS93} that, in order to 
prove that the mixed evacuation problem can be solved in polynomial time, 
it is sufficient to prove the following two lemmas. 

\begin{lemma} \label{lemma_1}
There exists a finite system of
inequalities with rational coefficients such that its solution set is equal to ${\bf K}$ 
and the encoding length of each inequality of the system is bounded by a polynomial in $L$. 
\end{lemma}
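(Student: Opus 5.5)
The plan is to replace each nonlinear constraint $d_i(X) \le o_{i,w_i}(X)$ in \eqref{eq:K} by finitely many linear inequalities in $(d_i,w_i)$ with small coefficients. Every other constraint in \eqref{eq:K} is already linear, has coefficients in $\{0,\pm1\}$, and has right-hand side taken from $b,u,c$, so its encoding length is polynomial in $L$.

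Fix $i$ and $X$. By Lemma~\ref{theorem:FF58}, $o_{i,w}(X)$ is the optimal value of the linear program that maximizes $\langle k_i^X,\xi\rangle$ subject to $0 \le \xi$, $\xi(a) \le q_i w(a)$ for $a\in A$, and flow conservation (S2) on $H_X=(U,E_X)$. This LP is always feasible, since $\xi=0$ is feasible. I would write its dual (this is presumably the dual problem referred to as \eqref{eq:dual}). The dual has a variable $\lambda(a)\ge 0$ for each capacity constraint on $a\in A$ and a free potential $\pi(v)$ for each $v\in U$. Its constraints are
\[
\pi(h(a))-\pi(t(a))+\lambda(a)\ \ge\ k_i^X(a) \quad (a\in A), \qquad \pi(h(a))-\pi(t(a)) \ \ge\ k_i^X(a) \quad (a\in E_X\setminus A),
\]
with the sign convention fixed appropriately, and its objective is to minimize $\sum_{a\in A} q_i w(a)\lambda(a)$. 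The feasible region $P_{i,X}$ of the dual does not depend on $w$. It is nonempty because the primal optimum is finite: circulations stay bounded, and the cycles through $r$ that avoid capacitated arcs have zero length, as there are no arcs between $r$-arcs except arcs of $A$. By strong duality,
\[
o_{i,w}(X)=\min\Big\{\sum_{a\in A}q_i w(a)\lambda(a) \ \Big|\ (\lambda,\pi)\in P_{i,X}\Big\}.
\]

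The objective is nonnegative because $w,\lambda\ge 0$, so the minimum is attained at a vertex of $P_{i,X}$ once the lineality space is quotiented out. One can do this by fixing $\pi(r)=0$, which is harmless because the objective does not involve $\pi$. The only unbounded directions then increase $\lambda$, and these do not decrease the objective. So $o_{i,w}(X)=\min_{\lambda\in\Lambda_{i,X}} q_i\langle w,\lambda\rangle$, where $\Lambda_{i,X}$ is the finite set of $\lambda$-parts of vertices of $P_{i,X}$. The constraint $d_i(X)\le o_{i,w_i}(X)$ is therefore equivalent to the family of linear inequalities
\[
d_i(X) - q_i\sum_{a\in A}\lambda(a)w_i(a)\le 0 \qquad (\lambda\in\Lambda_{i,X}).
\]
Ranging over all $i$ and $X$ gives a finite system whose solution set is ${\bf K}$.

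The main obstacle is bounding the encoding length of each such inequality. The constraint matrix of $P_{i,X}$, with $\pi(r)=0$ fixed, is a network matrix augmented by identity columns, hence totally unimodular. Its right-hand side $k_i^X$ is integral with entries $-\tau_i(a)$, $0$, or $\Theta+1$. Each vertex is therefore integral, and by Cramer's rule and total unimodularity each coordinate has absolute value at most the sum of the absolute values of the right-hand sides, which is bounded by roughly $|A|\max\tau_i + |T|(\Theta+1)$. This bound has encoding length $O(L)$. Multiplying by $q_i$ keeps each coefficient's bit length polynomial in $L$, and the inequality has $|S\cup T|+|A|$ coefficients, so its total encoding length is polynomial in $L$. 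The point to check carefully is the vertex argument with the free variables $\pi$: after fixing $\pi(r)=0$, I need the polyhedron to be pointed, or else I would argue directly via basic solutions whose basis matrix is a TU submatrix, so that the size bound holds regardless.
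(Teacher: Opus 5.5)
Your overall strategy matches the paper's: dualize the static-flow LP of Lemma~\ref{theorem:FF58}, note that the dual feasible region does not depend on $w$, and replace $d_i(X)\le o_{i,w_i}(X)$ by one linear inequality per extreme point. The step that fails as written is the normalization ``fix $\pi(r)=0$''. This removes the lineality space only when $H_X$ is weakly connected, which is false in general.

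For $X=T$ you get $E_X=A$, because $S\cap T=\emptyset$ and $T\setminus X=\emptyset$. So $r$ is isolated, and adding a constant to $\pi$ on any weakly connected component of $D$ preserves every dual constraint. The same problem arises whenever some weakly connected component of $D$ contains no vertex of $(S\cap X)\cup(T\setminus X)$. In those cases your polyhedron has no vertices, $\Lambda_{i,X}$ is empty, and the family of inequalities for that $X$ disappears.

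Separately, your claim that the only unbounded directions increase $\lambda$ is false. Raising $\pi$ at a vertex of $T\cap X$, which has no leaving arc in $E_X$, is a recession direction. This is harmless, since you only need pointedness and boundedness from below.

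The repair is exactly the paper's. Pick one vertex $v_C$ in each weakly connected component $C$ of $H_X$ and impose $\pi(v_C)=0$; this does not change the optimal value, because shifting $\pi$ by a constant on a component preserves feasibility and the objective. Pointedness then follows as in Lemma~\ref{lemma_3}:
\begin{itemize}
\item a line's direction must have zero $\lambda$-part, since $\lambda\ge 0$;
\item its $\pi$-part must agree at both ends of every arc of $E_X$, otherwise a suitable multiple violates that arc's constraint;
\item so the $\pi$-part is constant on each component, hence zero because of $\pi(v_C)=0$.
\end{itemize}
The added constraints are unit rows, so total unimodularity is preserved and your Cramer's-rule bound goes through. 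Your fallback via basic solutions of minimal faces would also work, because lines have zero $\lambda$-part and so $\lambda$ is constant on each minimal face.

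With this fix, your proof differs from the paper's only in the size bound. You use total unimodularity to get integral extreme points with $\lambda(a)\le\sum_{a'\in A}\tau_i(a')+|T|(\Theta+1)$. The paper instead cites the general theorem that vertices of a rational polyhedron have encoding length polynomial in that of its defining inequalities. Your bound is more explicit; the paper's argument uses no structure of the constraint matrix.
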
 

\begin{lemma} \label{lemma_2}
Assume that we are given a rational element 
$(\widehat{d}_1,\widehat{d}_2,\widehat{w}_1,\widehat{w}_2) \in \mathbb{V}$.
Then the separation problem for ${\bf K}$ and 
$(\widehat{d}_1,\widehat{d}_2,\widehat{w}_1,\widehat{w}_2)$
can be solved in time bounded by a polynomial in $L$ and $B$, where 
$B$ denotes the encoding length of $(\widehat{d}_1,\widehat{d}_2,\widehat{w}_1,\widehat{w}_2)$. 
\end{lemma}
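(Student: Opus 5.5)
The plan is to treat the two kinds of constraints in \eqref{eq:K} separately. The explicit linear constraints (sign conditions, (F1), (F2)) are polynomially many. For them I simply test $(\widehat{d}_1,\widehat{d}_2,\widehat{w}_1,\widehat{w}_2)$ directly and, if one fails, return that constraint as the separating hyperplane. So assume they all hold, in particular $\widehat{w}_i \ge 0$. What remains is, for each $i \in \{1,2\}$, to decide whether $\widehat{d}_i(X) \le o_{i,\widehat{w}_i}(X)$ for all $X \subseteq S \cup T$, and, if not, to produce a violated \emph{linear} inequality that is valid for ${\bf K}$.

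\emph{Finding a violated subset.} By Lemma~\ref{theorem:K00}, this is exactly testing $\widehat{d}_i \in \mathcal{F}_i(\widehat{w}_i)$. I would use the known fact (Klinz; Hoppe and Tardos~\cite{HoppeT00}) that $X \mapsto o_{i,w}(X)$ is submodular. Then $X \mapsto o_{i,\widehat{w}_i}(X) - \widehat{d}_i(X)$ is submodular, and I minimize it by a strongly polynomial submodular function minimization algorithm. Each evaluation of $o_{i,\widehat{w}_i}(X)$ takes polynomial time by Lemma~\ref{theorem:FF58}: it is a single static min-cost circulation problem on $H_X$, whose size is polynomial in $L$ and $B$ because $\Theta$ enters only as a cost $\Theta+1$ encoded in binary. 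If the minimum is nonnegative, all constraints hold and the point lies in ${\bf K}$. Otherwise we obtain an $X$ with $\widehat{d}_i(X) > o_{i,\widehat{w}_i}(X)$. (Alternatively, the polynomial-time feasibility test of Hoppe and Tardos could be used here, provided it returns a violated $X$.)

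\emph{Linearizing the violated constraint.} This is the step I expect to be the main obstacle, because $d_i(X) \le o_{i,w_i}(X)$ is not linear in $w_i$. I would dualize the LP of Lemma~\ref{theorem:FF58}. The primal is: maximize $\langle k_i^X,\xi\rangle$ subject to flow conservation on $H_X$, $\xi \ge 0$, and $\xi(a) \le q_i w(a)$ for $a \in A$. Its dual has potentials $p \in \mathbb{R}^U$ and variables $y \in \mathbb{R}^A_+$, and reads: minimize $\sum_{a\in A} q_i w(a) y(a)$ subject to
\begin{itemize}
\item $p(\text{head}(a)) - p(\text{tail}(a)) + y(a) \ge k_i^X(a)$ for $a \in A$, and
\item $p(\text{head}(a)) - p(\text{tail}(a)) \ge k_i^X(a)$ for $a \in E_X \setminus A$
\end{itemize}
(signs to be adjusted to the chosen orientation convention).

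The key point is that the dual feasible region $P_X$ does not depend on $w$. The primal is feasible ($\xi = 0$) and bounded, since every cycle in $H_X$ uses arcs of $A$, whose flows are capped. Hence strong duality gives $o_{i,w}(X) = \min_{(p,y)\in P_X} q_i\langle w, y\rangle$ for every $w \ge 0$. Now let $(p^*,y^*)$ be an optimal vertex of $P_X$ for $w = \widehat{w}_i$. It can be computed in polynomial time from a min-cost flow algorithm, which yields optimal potentials. Then:
\begin{itemize}
\item for every $w \ge 0$ we have $o_{i,w}(X) \le q_i\langle w, y^*\rangle$, so the inequality $d_i(X) - q_i\langle w_i, y^*\rangle \le 0$ is satisfied by every point of ${\bf K}$;
\item at the query point, $\widehat{d}_i(X) - q_i\langle \widehat{w}_i, y^*\rangle = \widehat{d}_i(X) - o_{i,\widehat{w}_i}(X) > 0$, so the inequality is violated.
\end{itemize}
It is therefore a valid separating hyperplane.

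\emph{Size bounds.} Since $P_X$ is defined by a system whose coefficients are in $\{0,\pm1\}$ and whose right-hand sides are bounded by $\max(\Theta+1, \max_a \tau_i(a))$ in absolute value, every vertex $y^*$ has encoding length polynomial in $L$. This keeps the hyperplane polynomially sized. The same observation, ranging over all $X$ and all vertices of $P_X$, is exactly what underlies Lemma~\ref{lemma_1}.
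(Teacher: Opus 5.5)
Your route is the paper's own. You check the explicit constraints directly, find a violated $X$ by submodular function minimization (evaluating $o_{i,\widehat{w}_i}$ via Lemma~\ref{theorem:FF58}), and separate with $d_i(X) \le q_i \langle y^*, w_i\rangle$, which is valid on ${\bf K}$ because the dual feasible region does not depend on $w$.

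The one step that fails as written is ``let $(p^*,y^*)$ be an optimal vertex of $P_X$''. Every dual constraint involves the potentials only through a difference $p(\text{head}(a)) - p(\text{tail}(a))$, and the objective does not involve $p$ at all. So adding the same constant to $p$ on all vertices of a weakly connected component of $H_X$ preserves feasibility and the objective value. Hence $P_X$ always contains a line and has no vertices. Your size bound is a statement about vertices, so it says nothing about the $y^*$ you actually pick.

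This is exactly why the paper replaces $P_X$ by ${\bf T}_i(X)$, adding $\beta(v_C)=0$ for one chosen vertex $v_C$ of each weakly connected component $C$, and proves in Lemma~\ref{lemma_3} that ${\bf T}_i(X)$ is pointed. The normalization only shifts potentials, so it changes neither $y$ nor the optimal value. Then an optimal extreme point exists; the paper computes one with Khachiyan's method. The standard vertex-size bound then gives a separating hyperplane whose coefficients have encoding length polynomial in $L$ alone.

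For Lemma~\ref{lemma_2} as stated, the damage is limited. Any polynomial-time algorithm that returns an optimal dual solution, such as your min-cost-flow potentials together with the induced $y^*$, already gives a valid violated inequality of encoding length polynomial in $L$ and $B$. Validity uses only weak duality and violation only strong duality, not vertex-ness. Your closing remark about Lemma~\ref{lemma_1}, however, fails outright without the normalization. Ranging over the nonexistent vertices of $P_X$ yields an empty system of inequalities, not a description of ${\bf K}$.
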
 

The linear program in Lemma~\ref{theorem:FF58}
can be described as follows. 
\begin{equation} \label{eq:primal}
\begin{array}{cl}
\mbox{Maximize} & 
\ 
\langle k_i^X, \xi \rangle 
\vspace{1mm} \\
\mbox{subject to} & 
\ 
\xi(a) \le q_i \cdot w(a) \ \ \ (a \in A)
\vspace{1mm} \\
& 
\ 
\xi(\delta_{E_X}(v)) = \xi(\varrho_{E_X}(v))
\ \ \ (v \in U)
\vspace{1mm} \\
& 
\ 
\xi \in \mathbb{R}_+^{E_X}. 
\end{array}
\end{equation}
Thus, the dual problem of \eqref{eq:primal}
can be described as follows. 
\begin{equation} \label{eq:dual}
\begin{array}{cl}
\mbox{Minimize} & 
\ 
q_i \cdot \langle w, \alpha \rangle
\vspace{1mm} \\
\mbox{subject to} & 
\ 
\alpha(a) 
+ \beta(\partial^-a)
- \beta(\partial^+a) 
\ge k_i^X(a)
\ \ \ (a \in A)
\vspace{1mm} \\
& 
\ 
\beta(\partial^-a)
- \beta(\partial^+a) 
\ge k_i^X(a)
\ \ \ (a \in E_X \setminus A)
\vspace{1mm} \\
& 
\ 
(\alpha, \beta) \in \mathbb{R}_+^A 
\times \mathbb{R}^U, 
\end{array}
\end{equation}
where
$\partial^+ a$ 
(resp.\ $\partial^- a$) 
represents the tail 
(resp.\ head) of $a$
for each arc $a \in E_X$. 
Notice that since the feasible region of \eqref{eq:primal}
is bounded and has a trivial solution, 
\eqref{eq:primal} has an optimal solution.
Thus, 
the duality theorem of linear programming (see, e.g., \cite[Theorem~3.7]{ConfortiCZ14}) 
implies that 
\eqref{eq:dual} has an optimal solution. 
It should be noted that the constraints of \eqref{eq:dual} 
do not depend on $w$. 

Define $\mathcal{C}_X$ as the set of weakly connected components of $H_X$. 
That is, $\mathcal{C}_X$ is the family of maximal subsets $C \subseteq U$ 
such that $C$ is connected in the undirected graph obtained by ignoring the 
orientations of arcs in $E_X$. 
Then for each element $C \in \mathcal{C}_X$, we select an arbitrary vertex 
in $C$ as a special vertex $v_C$. 

It is not difficult to see that even
if we add the constraint that 
$\beta(v_C) = 0$ for each element $C \in \mathcal{C}_X$ 
to \eqref{eq:dual}, 
the optimal objective value does not change. 
(If $\beta(v_C) \neq 0$, then we subtract $\beta(v_C)$ from 
$\beta(v)$ for each vertex $v \in C$.)
Define ${\bf T}_i(X)$ as 
the set of feasible solutions to \eqref{eq:dual} 
with the constraint that 
$\beta(v_C) = 0$ for each element $C \in \mathcal{C}_X$.
(Recall that the set of feasible solutions of \eqref{eq:dual} 
does not depend on $w$.)
Notice that ${\bf T}_i(X) \neq \emptyset$. 

\begin{lemma} \label{lemma_3}
${\bf T}_i(X)$ is pointed, i.e., 
${\bf T}_i(X)$ does not contain a line. 
\end{lemma}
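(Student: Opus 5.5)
A polyhedron contains a line exactly when its lineality space is nontrivial. So I will show that the lineality space of ${\bf T}_i(X)$ is $\{0\}$. Suppose ${\bf T}_i(X)$ contains a line, i.e., there are a point $(\alpha_0,\beta_0) \in {\bf T}_i(X)$ and a nonzero direction $(\gamma,\eta) \in \mathbb{R}^A \times \mathbb{R}^U$ such that $(\alpha_0,\beta_0) + \lambda(\gamma,\eta) \in {\bf T}_i(X)$ for every $\lambda \in \mathbb{R}$. The aim is to derive $(\gamma,\eta) = 0$, a contradiction.

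First I handle $\gamma$. Every point of ${\bf T}_i(X)$ satisfies $\alpha \in \mathbb{R}_+^A$, so $\alpha_0(a) + \lambda\gamma(a) \ge 0$ for all $\lambda \in \mathbb{R}$. Taking $\lambda \to \pm\infty$ forces $\gamma(a) = 0$ for every $a \in A$.

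Next I handle $\eta$, and this is the only step with real content. For every arc $a \in E_X$, the constraint of \eqref{eq:dual} at $a$ is a linear inequality. After setting $\gamma = 0$, its left-hand side depends on $\lambda$ only through the term $\lambda(\eta(\partial^-a) - \eta(\partial^+a))$. Since the inequality holds for all real $\lambda$, this coefficient must vanish, so $\eta(\partial^-a) = \eta(\partial^+a)$ for every arc $a \in E_X$. Hence $\eta$ is constant along every arc of $H_X$ regardless of orientation, and therefore constant on each weakly connected component $C \in \mathcal{C}_X$.

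Finally I use the normalization. The constraint $\beta(v_C) = 0$ holds along the whole line, so $\beta_0(v_C) + \lambda\eta(v_C) = 0$ for all $\lambda$, which gives $\eta(v_C) = 0$. Because $\eta$ is constant on $C$, it follows that $\eta \equiv 0$ on $C$. The components in $\mathcal{C}_X$ partition $U$, so $\eta = 0$. Together with $\gamma = 0$ this contradicts $(\gamma,\eta) \neq 0$, and so ${\bf T}_i(X)$ is pointed.

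The argument is essentially routine. The only point needing care is the third step: I must check that each constraint of \eqref{eq:dual}, including those for the auxiliary arcs in $E_X \setminus A$, involves $\beta$ only through the difference at the two endpoints of that arc. This is exactly what makes $\eta$ constant on weakly connected components.
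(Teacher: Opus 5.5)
Your proof is correct and follows the paper's argument essentially step for step. Nonnegativity of $\alpha$ rules out any $\alpha$-component of the direction; each arc constraint, including those for the auxiliary arcs, forces the $\beta$-direction to agree at both endpoints; and the normalization $\beta(v_C)=0$ then forces that direction to vanish on every weakly connected component. The only difference is cosmetic: you let $\lambda \to \pm\infty$, whereas the paper plugs in an explicit violating value of $\lambda$.
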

\begin{proof}
Let $(\widehat{\alpha},\widehat{\beta})$ be an 
element in ${\bf T}_i(X)$, and let 
$(d_1,d_2)$ be a vector in $\mathbb{R}^A \times \mathbb{R}^U$
such that some component of $(d_1,d_2)$ is not zero.
Assume that 
\begin{equation} \label{eq_1:line} 
(\widehat{\alpha},\widehat{\beta}) + \lambda \cdot (d_1,d_2) \in 
{\bf T}_i(X)
\end{equation}
for every real $\lambda \in \mathbb{R}$. 
Then we derive a contradiction. 
Define 
$(\alpha_{\lambda},\beta_{\lambda}) \coloneqq
(\widehat{\alpha},\widehat{\beta}) + \lambda \cdot (d_1,d_2)$. 

Assume that there exists an arc $a \in A$ such that 
$d_1(a) \neq 0$. 
Define 
\begin{equation*}
\lambda \coloneqq \frac{- \widehat{\alpha}(a) - 1}{d_1(a)}.
\end{equation*}
Then we have 
\begin{equation*}
\alpha_{\lambda}(a) = 
\widehat{\alpha}(a) + \lambda \cdot d_1(a) 
= -1 < 0. 
\end{equation*}
This contradicts \eqref{eq_1:line}.
Thus, we can assume that 
$d_1(a) = 0$ for every arc $a \in A$. 

\begin{claim} \label{claim_1:lemma_3}
For every arc $a \in E_X$, 
$d_2(\partial^-a) = d_2(\partial^+a)$. 
\end{claim}
\begin{proof}
Assume that there exists an arc $a \in E_X$ such that 
$d_2(\partial^-a) \neq d_2(\partial^+a)$. 
If $a \in A$, then we define $\Delta \coloneqq \widehat{\alpha}(a)$.
Otherwise, we define $\Delta \coloneqq 0$. 
Define 
\begin{equation*}
\lambda \coloneqq 
\frac{k_i^X(a) - \Delta  
- \widehat{\beta}(\partial^-a)
+ \widehat{\beta}(\partial^+a) - 1}{d_2(\partial^-a) - d_2(\partial^+a)}. 
\end{equation*}
If $a \in A$, then we define $\Delta_{\lambda} \coloneqq \alpha_{\lambda}(a)$.
Otherwise, we define $\Delta_{\lambda} \coloneqq 0$. 
Then we have 
\begin{equation*}
\begin{split}
& \Delta_{\lambda} 
+ \beta_{\lambda}(\partial^-a)
- \beta_{\lambda}(\partial^+a) \\
& = \Delta  
+ \widehat{\beta}(\partial^-a)
- \widehat{\beta}(\partial^+a) 
+ \lambda \cdot (d_2(\partial^- a) - d_2(\partial^+a)) \\
& = \Delta   
+ \widehat{\beta}(\partial^-a)
- \widehat{\beta}(\partial^+a) 
+ k_i^X(a) - \Delta  
- \widehat{\beta}(\partial^-a)
+ \widehat{\beta}(\partial^+a) - 1
< k_i^X(a).
\end{split} 
\end{equation*}
This contradicts \eqref{eq_1:line}.
\end{proof} 

Recall that some component of $(d_1,d_2)$ is not zero.
Since $d_1(a) = 0$ for every arc $a \in A$, 
there exists a vertex $v \in U$ such that 
$d_2(v) \neq 0$. 
Let $C$ be the element in $\mathcal{C}_X$ containing $v$.
Then Claim~\ref{claim_1:lemma_3} implies that 
$d_2(v_C) = d_2(v) \neq 0$. 
Define $\lambda \coloneqq 1$.
Then since $\widehat{\beta}(v_C) = 0$, 
\begin{equation*}
\beta_{\lambda}(v_C) = \widehat{\beta}(v_C) + d_2(v_C) \neq 0.
\end{equation*} 
This contradicts \eqref{eq_1:line}.
\end{proof}

Recall that ${\bf T}_i(X) \neq \emptyset$.
Thus, 
Lemma~\ref{lemma_3} implies that 
${\bf T}_i(X)$ has an extreme point
(see, e.g., \cite[Theorem~3.33 and p.109]{ConfortiCZ14}).
Define ${\rm ex}({\bf T}_i(X))$ as 
the set of extreme points of ${\bf T}_i(X)$. 
Notice that 
${\rm ex}({\bf T}_i(X))$ is finite
(see, e.g., \cite[Theorem~3.34]{ConfortiCZ14}). 

\subsection{Proof of Lemma~\ref{lemma_1}}
\label{section:proof_lemma_1}

Let $i$ be an integer in $\{1,2\}$. 
Then we consider the following inequalities in \eqref{eq:K}. 
\begin{equation} \label{eq_1:eq_K}
\forall X \subseteq S\cup T \colon
d_i(X) \le o_{i,w_i}(X).
\end{equation}
Lemma~\ref{theorem:FF58} implies that 
\eqref{eq_1:eq_K}
is equivalent to 
\begin{equation} \label{eq_2:eq_K}
\forall X \subseteq S\cup T \colon
d_i(X) \le \max\{\langle k_i^X, \xi \rangle \mid \xi \in {\bf S}_{i,w_i}(X)\}.
\end{equation}
By using \eqref{eq:dual} and 
the duality theorem of linear programming (see, e.g., \cite[Theorem~3.7]{ConfortiCZ14}), 
\eqref{eq_2:eq_K} is equivalent to 
\begin{equation} \label{eq_3:eq_K}
\forall X \subseteq S\cup T \colon
d_i(X) \le \min\{q_i \cdot  \langle w_i, \alpha \rangle
\mid (\alpha, \beta) \in {\bf T}_i(X)\}.
\end{equation}
It is known that 
there exists an extreme-point optimal solution to a linear program when 
this linear program has an optimal solution and 
its feasible region is pointed 
(see, e.g., \cite[p.15]{GrotschelLS93}).
Thus, \eqref{eq_3:eq_K}
is equivalent to 
\begin{equation} \label{eq_4:eq_K}
\forall X \subseteq S\cup T \colon
d_i(X) \le \min\{q_i \cdot  \langle w_i, \alpha \rangle
\mid (\alpha, \beta) \in {\rm ex}({\bf T}_i(X))\}. 
\end{equation}
Clearly, \eqref{eq_4:eq_K}
is equivalent to 
\begin{equation*}
\forall X \subseteq S\cup T,
\forall (\alpha, \beta) \in {\rm ex}({\bf T}_i(X)) \colon
d_i(X) \le q_i \cdot  \langle \alpha, w_i \rangle.
\end{equation*}
The encoding lengths of the coefficients and the right-hand sides of the constraints of 
\eqref{eq:dual} 
are bounded by a polynomial in $L$.
Thus, for every element 
$(\alpha, \beta) \in {\rm ex}({\bf T}_i(X))$,
the encoding length of $\alpha$ is bounded by a polynomial in $L$~\cite[Theorem~3.38]{ConfortiCZ14}. 
This completes the proof. 

\subsection{Proof of Lemma~\ref{lemma_2}}

For the constraints of ${\bf K}$ except the last family of constraints, 
we can determine whether these constraints are satisfied 
in time bounded by a polynomial in $L$ and $B$. 
If some constraint among these constraints is not satisfied, then 
we can straightforwardly obtain a desired separating hyperplane.
For every integer $i \in \{1,2\}$, 
since $o_{i,\widehat{w}_i}(\cdot)$ is submodular~\cite{HoppeT00}, 
we can determine whether 
$\widehat{d}_i(X) \le o_{i,\widehat{w}_i}(X)$ 
for every subset $X \subseteq S \cup T$ by using 
a submodular function minimization algorithm~\cite{IwataFF01,Schrijver00}
in time bounded by a polynomial in $L$ and $B$.
Notice that \eqref{eq:primal} implies that 
we can compute $o_{i,\widehat{w}^i}(X)$ for each subset $X \subseteq S \cup T$
in time bounded by a polynomial in $L$ and $B$. 

In what follows, we assume that we have 
an integer $i \in \{1,2\}$ and a subset $Z \subseteq S \cup T$ such that 
$\widehat{d}_i(Z) > o_{i,\widehat{w}^i}(Z)$.
In order to obtain a desired separating hyperplane, 
we first compute an extreme-point optimal solution $(\widehat{\alpha}, \widehat{\beta})$ 
to the following problem.
\begin{equation} \label{eq:lp_T}
\mbox{Minimize} \ \ \ 
q_i \cdot \langle \widehat{w}_i, \alpha \rangle
\ \ \ \mbox{subject to} 
\ \ \ (\alpha, \beta) \in {\bf T}_i(Z).
\end{equation}
Since the encoding length of \eqref{eq:lp_T} is bounded by a polynomial 
in $L$ and $B$, we can find such an optimal solution  
in time bounded by a polynomial in $L$ and $B$~\cite{Khachiyan80}. 
The encoding length of $\widehat{\alpha}$ is 
bounded by a polynomial in $L$ (see the last part of Section~\ref{section:proof_lemma_1}). 
Then we have 
\begin{equation*}
\widehat{d}_i(Z) > 
o_{i,\widehat{w}_i}(Z)
= q_i \cdot  \langle \widehat{\alpha},\widehat{w}_i \rangle.
\end{equation*}

Let $(d^{\prime}_1,d^{\prime}_2,w^{\prime}_1,w^{\prime}_2)$ be 
an arbitrary element in ${\bf K}$. 
Since $w^{\prime}_i$ does not affect the constraints of 
\eqref{eq:dual}, 
$(\widehat{\alpha}, \widehat{\beta})$ is also a feasible solution to 
\eqref{eq:dual}
with respect to $w^{\prime}_i,Z$.
Thus, 
\begin{equation*}
o_{i,w^{\prime}_i}(Z)
\le q_i \cdot \langle \widehat{\alpha}, w^{\prime}_i \rangle.
\end{equation*}
This implies that since
$(d^{\prime}_1,d^{\prime}_2,w^{\prime}_1,w^{\prime}_2) \in {\bf K}$, we have 
\begin{equation*}
d^{\prime}_i(Z) \le o_{i,w^{\prime}_i}(Z) \le 
q_i \cdot \langle \widehat{\alpha}, w^{\prime}_i \rangle. 
\end{equation*}
Thus, 
$d_i(Z) = q_i \cdot \langle \widehat{\alpha}, w_i \rangle$
is a desired separating hyperplane. 
This completes the proof. 

\bibliographystyle{plain}
\bibliography{mixed_evacuation_bib}

\end{document}